\def\ra{\longrightarrow}
\def\emptyset{\varnothing}
\def\<{\langle}
\def\>{\rangle}
\def\Z{\mathbb{Z}}
\def\Q{\mathbb{Q}}
\def\F{\mathbb{F}}
\def\isom{\cong}
\def\S{\mathcal{S}}
\newtheorem{theo}{Theorem}[section]
\newtheorem{cor}[theo]{Corollary}
\newtheorem{prop}[theo]{Proposition}
\theoremstyle{definition}
\newtheorem{Def}[theo]{Definition}
\newtheorem{ex}[theo]{Example}
\newtheorem{remark}[theo]{Remark}
\begin{document}

% ----------------------------------------------------------------

\title{Finite $p$-groups, entropy vectors and the Ingleton inequality for nilpotent groups}
\author{Pirita Paajanen\footnote{The author is supported by an Academy of Finland Postdoctoral Fellowship}\\ Department of Mathematics and Statistics\\ PO Box 68\\00014 University of Helsinki\\ Finland\\ pirita.paajanen@helsinki.fi}
\maketitle
% ----------------------------------------------------------------

\begin{abstract}
In this paper we study the capacity/entropy region of finite, directed, acyclic, multiple-sources, multiple-sinks network by means of group theory and entropy vectors coming from groups. 
There is a one-to-one correspondence between the entropy vector of a collection of $n$ random variables and a certain group-characterizable vector obtained from a finite group and $n$ of its subgroups.  We are looking at nilpotent group characterizable entropy vectors and show that they are all also Abelian group characterizable, and hence they satisfy the Ingleton inequality. It is known that not all entropic vectors can be obtained from Abelian groups, so our result implies that in order to get more exotic entropic vectors, one has to go at least to soluble groups or larger nilpotency classes. The result also implies that Ingleton inequality is satisfied by nilpotent groups of bounded class, depending on the order of the group. 
\end{abstract}

%\begin{keywords}
%Non-Shannon type inequalities, entropy regions, nilpotent groups, p-groups, network coding theory.
%\end{keywords}

\section{Introduction}
The coding theoretic background to this paper is the problem of determining the capacity region  of a finite, directed, acyclic multiple-sources multiple-sinks network. It is a notoriously difficult question to determine the precise coding capacity of such a network, and to make any progress we start by giving bounds for this capacity. This capacity region is essentially the same as characterizing the entropy region. 

Let $\mathcal{N}=\{ 1,2,\dots,n\}$ and let $X_1,X_2,\dots, X_n$ be $n$ jointly distributed discrete random variables. For any nonempty subset $\S\subseteq \mathcal{N}$, we write $X_\S=\{X_i:i\in \S\}$, with joint entropy $h_\S=H(X_i:i\in\S)$. We call the ordered $(2^n-1)$-tuple $$\mathfrak{h}=(h_\S)_{\emptyset\neq\S\subseteq \mathcal{N}}$$ an entropy vector.
The region $\Gamma^*_n$ is defined to consist of all entropic vectors, while the region $\Gamma_n$ is defined to be the subset of $2^n-1$-dimensional real space bounded by all standard information inequalities, that is, all Shannon-type inequalities. Now, it is standard knowledge, that $\Gamma_n^*$ is contained in $\Gamma_n$. To bound the region $\Gamma_n^*$ we might need also so-called non-Shannon-type inequalities. For $n=2,3$ the regions are known to be bound only by Shannon-type inequalities, but in the case $n=4$ we genuinely need also non-Shannon-type inequalities. There are a number of different inequalities found by several people, and these non-Shannon-type inequalities give a region between $\Gamma_4^*$ and $\Gamma_4$.  One of these inequalities is the Ingleton inequality given by Mat\'{u}\v{s} \cite{Matus}, since Mat\'{u}\v{s} and Studen\v{y} \cite{Matus-Studeny} showed that every vector in $\Gamma_4$ satisfying the Ingleton inequality is entropic, that is, it is in $\Gamma_4^*$. This gives us an inner bound for $\Gamma_4^*$.

An important and somewhat surprising tool to characterize the entropy region comes from group theory. It is well-known that there is a one-to-one correspondence between the entropy vector of a collection of $n$ random variables and a certain group-characterizable vector obtained from a finite group and $n$ of its subgroups \cite{Chan-Yeung}, which we will describe next. A more detailed exposition can be found in the well-written introduction of Boston and Nan in \cite{Boston}.

Let $G$ be a finite group, and let $G_1,\dots, G_n$ be a set of $n$ of its subgroups. Let $\mathcal{N}=\{1,2,\dots,n\}$. For all nonempty subsets of $\mathcal{S} \subseteq\mathcal{N}$, we define another subgroup, that is the intersection $G_{\mathcal{S}}=\cap_{i\in \mathcal{S}} G_i$. We write $g_\mathcal{S}=\log|G:G_\mathcal{S}|$. This vector turns out to be entropic. However, not every entropic vector can be characterized in this way.

\begin{Def}
Let $\mathfrak{h}$ be the entropic vector associated with $n$ random variables $X_1,\dots,X_n$. We call $\mathfrak{h}$ {\bf group characterizable} if there is a finite group $G$ and $n$ subgroups $G_1,\dots,G_n$ of $G$ such that $\mathfrak{h}=\mathfrak{g}=(g_\mathcal{S})_{\emptyset\neq\mathcal{S}\subseteq\mathcal{N}}$.
Similarly, we call $\mathfrak{h}$ {\bf Abelian group characterizable}, if the finite group $G$ is Abelian.
\end{Def}

We denote the region of all group-characterizable vectors by $\gamma_n$ and note that this is contained in $\Gamma_n^*$. Furthermore, this region $\gamma_n$ is nearly what we are interested in, since the conic closure of $\gamma_n$ is equal to the closure of $\Gamma_n^*$. Hence, studying group characterizable vectors is enough to study the region of $\Gamma_n^*$. In particular, for $n=4$, it is enough to study the gap between those vectors satisfying the Ingleton inequality and the true region of $\Gamma_4^*$ by using group characterizable vectors. Then Ingleton inequality in the context of groups can be written as \[|G_1||G_2||G_{34}||G_{123}||G_{124}|\geq |G_{12}||G_{13}||G_{14}||G_{23}||G_{24}|.\]

It is known that if an entropic vector is Abelian group characterizable, then the Ingleton inequality holds. In this paper we are concentrating on identifying Abelian group characterizable vectors. In particular we show that entropic vectors coming from certain nilpotent group are Abelian group characterizable, and satisfy the Ingleton inequality. Nilpotent groups is the simplest class of groups that are not Abelian, and our result implies that network codes coming from them are no better than linear codes.

In \cite{Mao-Hassibi} Mao and Hassibi study which groups violate the Ingleton inequality, and identify the symmetric group on 5 letters $S_5$, as the smallest such group, and more generally show that $\mathrm{PGL}_2(p)$ all violate the Ingleton inquality. The work of Boston and Nan \cite{Boston} finds more examples of groups that violate the Ingleton inequality, including soluble groups. So far the case of nilpotent groups is open. Stancu and Oggier claim in \cite{Stancu-Oggier} that nilpotent groups satisfy the Ingleton inequality, however there is a mistake in the proof of Theorem 1, also known to the authors, namely, an intersection of quotients in not necessarily equal to a quotient of intersection, that brings the whole argument down. 
In the current paper we do get partial results  and we show that some nilpotent groups satisfy the inequality. There is a technical condition in the methodology of this paper, that prevents a general proof. In general we do expect the results to hold for all nilpotent groups, but this would require a whole new methodology.

The main results of this paper are:

\begin{enumerate}
\item  The set of $p$-group characterizable vectors for $p$-groups of class $c<p$
is contained in the set of Abelian group characterizable vectors provided class $c<p$.
\item  The set of nilpotent group characterizable vectors is Abelian group characterizable, if their class is smaller than the smallest prime dividing the order of the group. In particular, all class two nilpotent groups of odd order produce Abelian characterizable vectors.
\item $p$-groups of class $c< p $ satisfy the Ingleton inequality.
\item Finite nilpotent groups satisfy the Ingleton inequality, provided their class is smaller than the smallest prime dividing the order of the group. In particular, all class two nilpotent groups of odd order satisfy the Ingleton inequality.
 \end{enumerate}
 
 \section{$p$-groups}
 
 We begin with a number of group theoretic definitions.
 
 \begin{Def}
A finite group $G$ whose order is $p^n$, for some prime $p$ and $n\geq 1$ is called a {\bf $p$-group}. 
\end{Def}

\begin{Def}
Let $x,y\in G$, then we call $[x,y]=x^{-1}y^{-1}xy$ the {\bf commutator} of $x$ and $y$. The {\bf commutator subgroup} is the subgroup generated by all commutators, $[G,G]=\< [x,y]: x,y\in G \>$.\end{Def}

\begin{Def}\label{lower}
The {\bf lower central series} for a group $G$ is a sequence of subgroups $\gamma_i(G)$ defined as follows. The first term is {$\gamma_1(G)=G$} and inductively $\gamma_{i+1}(G)=[\gamma_i(G),G]$ for $i\geq 1$. The group $G$ is called {\bf nilpotent of class $c$} if $\gamma_{c+1}(G)=1$, where $c$ is the least when this happens. \end{Def}

All this background material and the following proposition can be found in any standard book on group theory, see \cite{Alperin-Bell} for example.

\begin{prop}
All finite $p$-groups are nilpotent.
\end{prop}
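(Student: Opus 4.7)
The plan is to reduce the claim to the well-known fact that every nontrivial finite $p$-group has nontrivial center, and then iterate centers. First I would establish this center lemma via the class equation: letting $G$ act on itself by conjugation, every orbit has size dividing $|G|=p^n$, so every non-singleton orbit has size equal to a positive power of $p$. The set of singleton orbits is precisely $Z(G)$, and reducing $|G| = |Z(G)| + \sum |G:C_G(x)|$ modulo $p$ forces $p \mid |Z(G)|$, hence $Z(G)\neq 1$.

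Next I would introduce the upper central series $1=Z_0\leq Z_1\leq Z_2\leq\cdots$ defined inductively by $Z_{i+1}/Z_i = Z(G/Z_i)$. Each quotient $G/Z_i$ is again a finite $p$-group, so by the center lemma we get a strict inclusion $Z_i < Z_{i+1}$ whenever $Z_i \neq G$. Finiteness of $G$ then forces the chain to stabilize, producing a least integer $c$ with $Z_c = G$.

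Finally I would connect this to the lower central series of Definition \ref{lower} by proving the inclusion $\gamma_{i+1}(G)\leq Z_{c-i}$ by induction on $i$. The base case $i=0$ is just $\gamma_1(G)=G=Z_c$. For the inductive step, the defining property $[Z_{c-i},G]\leq Z_{c-i-1}$ of the upper central series gives
\[
\gamma_{i+2}(G) \;=\; [\gamma_{i+1}(G),G] \;\leq\; [Z_{c-i},G] \;\leq\; Z_{c-i-1}.
\]
Specialising to $i=c$ yields $\gamma_{c+1}(G)\leq Z_0=1$, so $G$ is nilpotent of class at most $c$.

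The only place where real work happens is the nontriviality of $Z(G)$; everything else is bookkeeping comparing the two canonical central series. This is why the proposition is stated as folklore and attributed to a standard reference in the excerpt rather than proved in detail.
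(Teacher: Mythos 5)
Your proof is correct and is exactly the standard textbook argument: the class equation forces $Z(G)\neq 1$, the upper central series therefore terminates at $G$, and the inclusion $\gamma_{i+1}(G)\leq Z_{c-i}$ transfers the conclusion to the lower central series used in Definition~\ref{lower}. The paper itself gives no proof, deferring to a standard reference, so there is nothing to compare against beyond noting that your argument is the one that reference would supply.
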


The results of the current paper are based on Lazard correspondence, that allows us to define a Lie ring structure from the group $G$. First we define Lie ring. 

\begin{Def}
 A {\bf Lie ring} $L$ is an additive Abelian group $L$ together with an operation $[\cdot,\cdot]:L\times L\ra L$  satisfying
 the following identities
 \begin{enumerate}
 \item $[x,x]=0$
 \item $[y,x]=-[x,y]$
 \item $[x,[y,z]]+[y,[z,x]]+[z,[x,y]]=0$
 \end{enumerate} for all $x,y,z\in L$. The Property 3. above is called the Jacobi identity. We call $[,]$ the {\bf Lie bracket} or the Lie commutator.
 \end{Def}

\begin{ex}\label{Heisenberg} Let $\F_p$ be the field with $p$ elements.
The Heisenberg Lie ring consists of the set of matrices
\[\mathcal{H}=
\left\{\left(
\begin{array}{ccc}
 0 & a_{12} & a_{13}\\
0 &0&a_{23}\\
0&0&0
\end{array}
\right): a_{12},a_{13}, a_{23}\in \F_p\right\}
\] with matrix addition and the Lie bracket $[A,B]=AB-BA$ for $A,B\in \mathcal{H}$.
\end{ex}

Next we construct a Lie ring structure from a finite $p$-group. This construction is known as the Lazard correspondence, and mathematicians working with $p$-groups have been familiar with it since the 1950s.  We first illustrate the construction by an example of a group of $3^3=27$ elements.
  
 \begin{ex} We look at an example over $\F_p$, the field with 3 elements.
 Let $G$ be a group consisting of the $3\times 3$ matrices
\[\mathcal{G}=
\left\{\left(
\begin{array}{ccc}
 1 & b_{12} & b_{13}\\
0 &1&b_{23}\\
0&0&1
\end{array}
\right): b_{12},b_{13},b_{23}\in \F_3\right\}
\] where the group operation is the normal matrix multiplication. 

Let $g,h\in G$. We now define a new addition on this set, by putting $g\oplus h:=gh[h,g]^2$, which gives
\begin{align*}&\left(
\begin{array}{ccc}
 1 & b_{12} & b_{13}\\
0 &1&b_{23}\\
0&0&1
\end{array}
\right)\oplus
\left(
\begin{array}{ccc}
 1 & c_{12} & c_{13}\\
0 &1&c_{23}\\
0&0&1
\end{array}
\right)
=\\&\left(
\begin{array}{ccc}
 1 & b_{12}+c_{12} & b_{13}+c_{13}+2b_{12}c_{23}+2c_{12}b_{23}\\
0 &1&b_{23}+c_{23}\\
0&0&1
\end{array}
\right)\end{align*}
using the fact that the entries of the matrices are considered $\mod 3$. This operation is commutative, and on the $(1,2), (2,3)$-entries it is just addition of entries, but the addition on $(1,3)$-entry is twisted, so that this new addition is not the same addition as normal addition of matrices. 
However, this new addition gives an Abelian group structure. It is easy to see that the group commutator corresponds to the Lie commutator defined in the previous example. This Lie ring is isomorphic to the  Heisenberg Lie ring of Example \ref{Heisenberg} in the case that $p=3$.
\end{ex}

 More generally for groups of nilpotency class two,
 the addition in the Lie algebra setting will be defined using the group operation as follows:
$$x+y:=xy[y,x]^{\frac{1}{2}},$$
where $[,]$ is the group commutator and for $z\in G$ the exponentiation means $z^{\frac{1}{2}}:=z^{(p^k+1)/2}$ and $p^k$ is the order of $z$. The Lie bracket will be simply the commutator operation on $G$. 

For a general $p$-group the construction goes via the Hausdorff formula. the Hausdorff formula gives an expression for the formal power series $$\Phi(X,Y)=\log(\exp(X)\cdot \exp(Y))\in \Q((X,Y))$$ in non-commuting indeterminates $X,Y$. The logarithm and exponential are defined $$\log(1+X)=\sum_{n=1}^\infty (-1)^{n-1}X^n/n\ \mathrm{and}\ \exp(X)=\sum_{n=0}^\infty X^n/n!$$ as the usual formal power series.
The Hausdorff formula is an infinite series, and it starts $$X+Y+\frac{1}{2}[X,Y]+\frac{1}{12}[X,[X,Y]]  +\frac{1}{12}[Y,[Y,X]]+\dots $$ see e.g. Serre \cite{Serre}. Later terms contain Lie brackets/commutators of increasing length.
When the group/Lie ring is nilpotent, commutators of only certain finite length will survive, and so for nilpotent groups the Hausdorff formula is truncated into a finite formula. Further, we require the nilpotency class to be $c < p$ for the correspondence to hold, so that the denominators of the coefficients of the series do not interfere with the algebraic structure. See Khukhro \cite{Khukhro} for details. Here we just formulate the theorem in the form that it is used.

\begin{theo}[Lazard's correspondence]
The Hausdorff formula and its inverse set up a correspondence between finite $p$-groups of nilpotency class less than $p$, and nilpotent Lie rings of class less than $p$ whose additive group is a finite $p$-group. The correspondence preserves the orders and nilpotency classes of the objects; these extend to subgroups and Lie subrings.\end{theo}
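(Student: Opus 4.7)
My plan is to construct explicit mutually inverse maps between the two categories. Given a nilpotent Lie ring $L$ of class $c<p$ whose additive group is a finite $p$-group, I define a binary operation by applying the truncated Hausdorff series
\[x\cdot y := \Phi(x,y)=x+y+\tfrac{1}{2}[x,y]+\tfrac{1}{12}[x,[x,y]]+\tfrac{1}{12}[y,[y,x]]+\cdots\]
Because every Lie commutator of length $\geq c+1$ vanishes in $L$, this infinite series collapses to a polynomial expression in the Lie operations. Conversely, starting from a $p$-group $G$ of class $c<p$, I use the inverse series (essentially $\log$ of the group product) to define an addition $x+y$ and bracket $[x,y]$ on the underlying set of $G$. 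The goal is then to show that these two constructions are well defined and are inverse to one another.

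The crucial technical step — and the reason class $<p$ is necessary — is that the rational denominators appearing in $\Phi$ and its inverse must be invertible in the ring of operators acting on a $p$-group. The Lazard/Magnus integrality statement asserts that the coefficient of any bracket monomial of length $k$ in the BCH series has denominator whose prime factors are all at most $k$. Hence for $k\leq c<p$ every denominator is coprime to $p$, so the operation descends to our $p$-group or $p$-Lie ring. Without the hypothesis $c<p$, coefficients like $1/p$ would be meaningless on a set of $p$-power order, and the construction collapses.

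Once well-definedness is established, verification of the group and Lie ring axioms reduces to formal identities in the free Lie algebra on two or three generators, truncated below degree $c+1$. Associativity of $\cdot$ follows from the formal identity $\Phi(\Phi(X,Y),Z)=\Phi(X,\Phi(Y,Z))$; the identity element is $0$; inverses are $-x$, since $\Phi(X,-X)=0$. Antisymmetry and the Jacobi identity for the bracket derived from a group come from manipulating the Hall--Witt commutator identity through the logarithm expansion. That the two constructions are mutually inverse is the formal statement that $\log$ and $\exp$ invert each other, which again becomes a finite polynomial identity modulo brackets of length $>c$.

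For the last sentence of the theorem, a subset $H$ closed under the group operation is automatically closed under addition and Lie bracket, because each is expressible polynomially in terms of the other via $\Phi$ or its inverse; the converse is symmetric. Orders are preserved because the underlying set is unchanged, and the lower central series of $G$ coincides term by term with that of $L$ since $x^{-1}y^{-1}xy$ agrees with $[x,y]$ modulo higher-degree brackets in the BCH expansion. The main obstacle in the whole argument is the integrality result for BCH coefficients: this is a genuinely non-trivial combinatorial fact, usually derived from Dynkin's explicit formula for $\Phi$, and forms the technical heart of Lazard's original paper; once it is granted, the rest is formal manipulation in the free nilpotent Lie algebra.
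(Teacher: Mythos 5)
The paper does not actually prove this theorem: it states it as a known result and explicitly defers to Khukhro's book for details, so there is no internal proof to compare against. Your sketch is the standard argument for the Lazard correspondence and is correct in outline. You correctly isolate the one genuinely non-trivial ingredient, namely that the denominator of every degree-$k$ term of the Hausdorff series is divisible only by primes at most $k$, so that for class $c<p$ all required divisions make sense on a set of $p$-power order; the remaining verifications (associativity from $\Phi(\Phi(X,Y),Z)=\Phi(X,\Phi(Y,Z))$, the Lie axioms for the inverse formulas, the degree-by-degree matching of the lower central series) are indeed formal identities in the free nilpotent $\Q$-Lie algebra that descend once integrality is granted. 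Two points you should not gloss over in a full write-up: first, closure of a subgroup $H$ under the induced Lie operations uses not only closure under the group product but also closure under extraction of $n$-th roots for $n$ coprime to $p$; this holds because such a root of $z$ is a power of $z$ in a finite $p$-group, but it is a separate step. Second, the integrality of the Hausdorff coefficients is the technical heart of the theorem and you take it on faith from Dynkin's formula; that is a legitimate citation, and it leaves your proposal at essentially the same level of completeness as the paper itself, which likewise outsources the entire proof to the literature.
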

   
 \begin{theo}[Main Theorem]
 Let $G$ be a finite $p$-group of nilpotency class $c$. If $c<p$, then the entropic vector $\mathfrak{g}$ associated with $G$ is Abelian group characterizable for any $n$, i.e.,  there is an Abelian group $A$ and subgroups $A_1,\dots,A_n$ such that  for all nonempty $\emptyset\neq\mathcal{S}\subseteq\mathcal{N}$ we have 
 $\log|G:G_\mathcal{S}|=\log|A:A_\mathcal{S}|$ where
 $G_{\mathcal{S}}=\cap_{i\in \mathcal{S}} G_i$. 
 \end{theo}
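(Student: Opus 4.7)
The plan is to invoke Lazard's correspondence to replace the nonabelian $p$-group $G$ with an associated Lie ring $L$ whose additive group is by definition abelian. Applying Lazard's theorem to $G$, which has class $c<p$, I obtain a nilpotent Lie ring $L$ of order $|G|$; write $A$ for its additive group, a finite abelian $p$-group. The correspondence sends each subgroup $G_i$ of $G$ to a Lie subring $L_i$ of $L$ of the same order. I then take $A_i$ to be the underlying additive subgroup of $L_i$ inside $A$; each $A_i$ is automatically an abelian subgroup of $A$ with $|A_i|=|G_i|$.

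The key step is to verify that the Lazard bijection respects intersections, i.e.\ that $G_{\mathcal{S}}=\bigcap_{i\in\mathcal{S}} G_i$ corresponds to $\bigcap_{i\in\mathcal{S}} L_i$. This is precisely the point at which the earlier argument of Stancu and Oggier broke down in a quotient setting, but here the situation is cleaner because the Lazard correspondence is realised at the level of underlying sets by the mutually inverse bijections $\exp\colon L\to G$ and $\log\colon G\to L$. Since the class is smaller than $p$, the Hausdorff series truncates to a finite polynomial expression with coefficients that make sense in the relevant $p$-group, so $\exp$ and $\log$ are honest set bijections that restrict to bijections between subgroups of $G$ and Lie subrings of $L$. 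Set bijections commute with set-theoretic intersection, hence $\log(G_{\mathcal{S}})=\bigcap_i \log(G_i)=\bigcap_i L_i$, which is itself a Lie subring and therefore equals the subring $L_{\mathcal{S}}$ associated with $G_{\mathcal{S}}$; in particular $|L_{\mathcal{S}}|=|G_{\mathcal{S}}|$.

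To conclude, intersections of additive subgroups of the abelian group $A$ coincide with set-theoretic intersections, so $A_{\mathcal{S}}=\bigcap_{i\in\mathcal{S}} A_i$ agrees with $\bigcap_i L_i$ as a subset of $A$, giving $|A_{\mathcal{S}}|=|L_{\mathcal{S}}|=|G_{\mathcal{S}}|$. Taking logarithms then yields $\log|A:A_{\mathcal{S}}|=\log|G:G_{\mathcal{S}}|$ for every nonempty $\mathcal{S}\subseteq\mathcal{N}$, as required. The only real obstacle is the intersection compatibility in the middle paragraph: everything else is formal once one accepts that Lazard's correspondence literally comes from the \emph{set} bijection $\exp$, which restricts to a bijection between subgroups and Lie subrings. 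This also explains why the hypothesis $c<p$ is essential to the method: outside this range the Hausdorff series has denominators not invertible in $G$, and $\exp,\log$ no longer give a set bijection, so a genuinely different approach would be needed.
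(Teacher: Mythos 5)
Your proposal is correct and follows essentially the same route as the paper: apply the Lazard correspondence (valid since $c<p$) to obtain a Lie ring on the same underlying set, note that the bijection carries subgroups to Lie subrings of the same order and, being a set bijection, respects intersections, and then take $A$ to be the additive group of the Lie ring with $A_i$ the additive subgroups underlying the $L_i$. Your explicit justification via the $\exp$/$\log$ set bijections is a slightly more careful phrasing of the paper's remark that the correspondence ``is identity on the elements'' and hence preserves intersections.
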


 \begin{proof}
 We will show that there is a bijection from $G$ to an Abelian group  $A$, which is identity on the elements, and sends a subgroup $G_i\leq G$ to a subgroup $A_i\leq A$. Moreover it is constructed in such a way that it respects the inclusion and intersections of the subgroups.

 Let $S$ be the underlying set of $G$,  and let us denote group structure on $S$ by $G=(S,*)$ and the Lie ring structure by $L=(S,+,[,])$. 
The Lazard correspondence sets up a bijection 
 \begin{align*}
\Psi: G&\ra L\\
% G_i&\mapsto L_i
\end{align*} 
 which is identity on the elements, sends subgroups to subalgebras, normal subgroups to ideals, and preserves indeces of subgroups, and respects intersections, since it is identity on elements. The Lie bracket corresponds to the group commutator.
 
 Further we can define an Abelian group $A=(S,+)$ from the Lie ring by simply forgetting the bracket operation. All  Lie subrings are closed as Abelian groups, and hence each Lie subring $L_i$ corresponds to a subgroup $A_i\leq A$, so we have a map \begin{align*} \Phi:L&\ra A.\\ %L_i&\ra A_i 
\end{align*} Now the map $\Phi\circ\Psi:G\ra A$ gives us the desired correspondence between subgroups $G_i\leq G$ and Abelian subgroups $A_i\in A$ and hence $G$ is uniformly Abelian representable. 
 \end{proof}

\begin{remark}
There  may be more subgroups in an Abelian group $A$ than in $G$, so the map may not give a one-to-one correspondence of subgroups. 
\end{remark} 
 
 \begin{ex}
 Let us consider the example of a group with presentation $$G=\<x,y: x^{p^2}=y^p=1, [x,y]=x^p \>$$ for $p\geq 3$. The group $G$ is nilpotent of class 2, so we can apply the Lazard correspondence, giving us a Lie ring with the same presentation.  The group $G$ has $p+1$ subgroups of index $p$, and $p+1$ subgroups of index $p^2$, and since the orders of the subgroups are so small, all subgroups are Abelian. Indeed, the Abelian group $\Z/p^2\Z\oplus \Z/p\Z$ has identical subgroup structure. In the Figure \ref{fig: Abelian group example}, we give the case $p=3$.

 \begin{figure}[htbp]
\begin{center}
\begin{tikzpicture}
\node (G) at (3,3) {$G$};
\node (A) at (2,2) {$G_1$}
edge (G);
\node (B) at (3,2) {$G_2$}
edge (G);
\node (C) at (4,2) {$G_3$}
edge (G);
\node (D) at (5,2) {$G_4$}
edge (G);
\node (a) at (0,1) {$H_1$}
edge (A);
\node (b) at (1,1) {$H_2$}
edge (A);
\node (c) at (2,1) {$H_3$}
edge (A);
\node (d) at (3,1) {$H_4$}
edge (A)
edge (B)
edge (C)
edge (D);
\node at (3,0) {$\{1\}$}
edge (a)
edge (b)
edge (c)
edge (d);
\end{tikzpicture}
\end{center}
\caption{Subgroups of $\Z/9\Z\oplus \Z/3\Z$.}\label{fig: Abelian group example}
\end{figure} 
\end{ex}

 Next we define the concept of a subgroup lattice. A pictorial illustration is given in Figure \ref{fig: Abelian group example}. This will be used in applications to Ingleton inequality.
 Let $G$ be a finite group and let $G_1,\dots,G_k$ be the set of its subgroups. The subgroups can be ordered by inclusion and they form a partially ordered set. This set is also a lattice, and furthermore the subgroup lattice can be expressed as a graph, whose vertices are the subgroups $G_1=G, G_2,\dots,G_k=1$ and there is an edge between $G_i$ and $G_j$ if $G_i\leq G_j$. Intersections and joins of subgroups are easily identified in the subgroup lattice.
 
 \begin{cor}
 The subgroup lattice of a $p$-group $G$ or order $p^n$ of class $c<p$ can be embedded in the subgroup lattice of the Abelian group $A$.
 \end{cor}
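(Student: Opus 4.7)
The plan is to apply the Main Theorem directly. It supplies a bijection $\Phi \circ \Psi : G \to A$ that is the identity on the common underlying set $S$ and that sends each subgroup $H \leq G$ to a subgroup $\iota(H) \leq A$ consisting of exactly the same elements of $S$. I would define the candidate embedding of subgroup lattices by $H \mapsto \iota(H)$, and then verify the required structural properties lattice-theoretically.

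First, I would check injectivity: since the correspondence is the identity on elements, two distinct subgroups of $G$ are distinct as subsets of $S$, so their images $\iota(H)$ are distinct subgroups of $A$. Next, I would check that $\iota$ preserves order and meets. If $H \subseteq K$ as subsets of $S$ (i.e., $H \leq K$ in $G$), the same inclusion persists in $A$; and the intersection $H_{\mathcal{S}} = \bigcap_{i\in\mathcal{S}} G_i$ taken in $G$ coincides set-theoretically with the corresponding intersection $\bigcap_{i\in\mathcal{S}} A_i$ in $A$, since subgroup intersection is just set intersection. Hence $\iota$ is an injective, order-preserving, meet-preserving map from the subgroup lattice of $G$ into the subgroup lattice of $A$.

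The one subtlety I would flag is behaviour on joins. Under Lazard, the group-theoretic join $\langle H_1, H_2\rangle$ in $G$ corresponds to the Lie subring generated by $L_1 \cup L_2$, whose underlying Abelian group may strictly contain the additive join $\langle A_1, A_2\rangle$ in $A$, because Lie subrings are closed under the bracket whereas additive subgroups of $A$ need not be. So $\iota$ is an order- and meet-embedding of subgroup lattices, even though it need not be a full lattice homomorphism; this is precisely consistent with the Remark following the Main Theorem, which observes that $A$ can have more subgroups than $G$. I expect no real obstacle beyond phrasing this carefully: all the structural verification is immediate from the fact that the Lazard map is the identity on the underlying set.
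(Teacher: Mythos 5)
Your proposal is correct and is essentially the paper's argument: the paper's proof is simply ``follows directly from the properties of the map $\Phi\circ\Psi$,'' and you have spelled out exactly those properties (identity on elements, hence injective, order- and intersection-preserving on subgroups). Your caveat about joins is a worthwhile refinement the paper glosses over, and it does not affect the application, since the Ingleton inequality only involves intersections.
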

 
 \begin{proof}
 Follows directly from the properties of the map $\Phi\circ\Psi$.
 \end{proof}
 
 \begin{cor}
  A $p$-group $G$ of class $c<p$ satisfies the Ingleton inequality.
 \end{cor}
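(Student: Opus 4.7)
The plan is to reduce the corollary directly to the Abelian case, using the Main Theorem as a black box. First, fix the four subgroups $G_1,\ldots,G_4\leq G$ that appear in the Ingleton expression, and apply the Main Theorem with $n=4$. This produces an Abelian group $A$ together with subgroups $A_1,\ldots,A_4\leq A$ satisfying $|G:G_\mathcal{S}|=|A:A_\mathcal{S}|$ for every nonempty $\mathcal{S}\subseteq\{1,2,3,4\}$. Since the bijection $\Phi\circ\Psi$ constructed in the proof of that theorem is the identity on the underlying set, we also have $|G|=|A|$, and consequently $|G_\mathcal{S}|=|A_\mathcal{S}|$ for every such $\mathcal{S}$.

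The next step is simply to substitute these equalities into the group-theoretic Ingleton inequality
\[
|G_1||G_2||G_{34}||G_{123}||G_{124}|\geq |G_{12}||G_{13}||G_{14}||G_{23}||G_{24}|,
\]
which transforms it, factor by factor, into the corresponding inequality with every $G_\mathcal{S}$ replaced by $A_\mathcal{S}$. It then remains only to invoke the classical fact, already recorded in the introduction, that every Abelian-group-characterizable vector satisfies the Ingleton inequality; this would close the argument.

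There is no genuine obstacle: the Main Theorem was engineered precisely so that each index $|G:G_\mathcal{S}|$ is matched by an Abelian index $|A:A_\mathcal{S}|$, and the Ingleton inequality is phrased entirely in terms of such invariants. The only minor point to verify is that a single Abelian group $A$ carries the whole collection of subgroup images at once, rather than four separate Abelian representations, but this is precisely what the description of $\Phi\circ\Psi$ as a set-theoretic identity bijection, sending each $G_i$ simultaneously to its corresponding $A_i$, guarantees.
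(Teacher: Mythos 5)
Your proposal is correct and follows essentially the same route as the paper: both arguments apply the Main Theorem (equivalently, the lattice-embedding corollary derived from the map $\Phi\circ\Psi$) to transfer the four subgroups and all their intersections into an Abelian group $A$ with matching cardinalities, and then invoke the known fact that Abelian groups satisfy the Ingleton inequality. Your version is merely a little more explicit about the numerical bookkeeping ($|G|=|A|$, hence $|G_{\mathcal{S}}|=|A_{\mathcal{S}}|$), which the paper leaves implicit.
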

 
 \begin{proof}
 The subgroup lattice of $G$ can be embedded in the subgroup lattice of $A$. So given any 4 subgroups $G_1,\dots, G_4$ of $G$ and their intersections, there is an order preserving map to subgroups $A_1,\dots,A_4$ of $A$ and their intersections. Since $A$ is Abelian, the subgroups $A_1,\dots,A_4$ and their intersections satisfy the Ingleton inequality. 
 \end{proof}

 \section{General nilpotent groups}
 
 A finite group $G$ is nilpotent if it has a lower central series as in the Definition \ref{lower}. In practical problems the following proposition allows us to reduce questions about nilpotent groups into $p$-groups. 
 
 \begin{Def}
 Let $G$ be a finite group of order $|G|=mp^k$ where $p\nmid m$, then a subgroup $P$ of order $p^k$ is called a {\bf Sylow-$p$-subgroup} of $G$.
 \end{Def}
 
\begin{prop}
A finite nilpotent group is a direct product of its Sylow-$p$-subgroups. That is, $G\isom P_1\times P_2\times \dots \times P_k$, where $|P_i|=p_i^{k_i}$ and the $p_i$ are all distinct primes.
\end{prop}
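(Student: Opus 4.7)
The plan is to show that every Sylow subgroup of a finite nilpotent group is normal, after which the direct product decomposition follows. The key technical tool is the \emph{normalizer condition}: in any nilpotent group $G$, a proper subgroup $H<G$ is always properly contained in its normalizer $N_G(H)$. To establish this, I would let $k$ be the smallest index such that $\gamma_k(G)\leq H$; since $H$ is proper and $\gamma_1(G)=G$, we have $k\geq 2$, so $\gamma_{k-1}(G)\not\leq H$. For any $g\in\gamma_{k-1}(G)$ and any $h\in H$, the commutator $[g,h]$ lies in $[\gamma_{k-1}(G),G]=\gamma_k(G)\leq H$, forcing $g^{-1}hg\in H$ and hence $g\in N_G(H)\setminus H$.

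Next I would apply this to a Sylow $p$-subgroup $P$ via a Frattini-style argument. Set $N=N_G(P)$ and show that $N$ is self-normalizing, i.e.\ $N_G(N)=N$. Indeed, for $x\in N_G(N)$, the conjugate $xPx^{-1}\leq N$ is another Sylow $p$-subgroup of $N$; Sylow's theorem applied inside $N$ gives $n\in N$ with $n(xPx^{-1})n^{-1}=P$, so $nx\in N_G(P)=N$ and hence $x\in N$. Combining $N_G(N)=N$ with the normalizer condition rules out $N$ being a proper subgroup, forcing $N=G$, so $P$ is normal in $G$. Running the argument for each prime $p_i$ dividing $|G|$ yields normal Sylow subgroups $P_1,\dots,P_k$.

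Finally, I would assemble the direct product. By Lagrange, $P_i\cap P_j=1$ whenever $i\neq j$, and more generally the order of the product $Q_i=P_1\cdots P_{i-1}P_{i+1}\cdots P_k$ is coprime to $|P_i|$, so $P_i\cap Q_i=1$. Normality gives $[P_i,P_j]\leq P_i\cap P_j=1$ for $i\neq j$, so elements of distinct Sylow subgroups commute, $P_1P_2\cdots P_k$ is a subgroup of order $\prod_i|P_i|=|G|$, and the standard recognition criterion for internal direct products yields $G\isom P_1\times\cdots\times P_k$. The only real obstacle is the normalizer condition itself, which relies on the careful choice of $k$ relative to the lower central series from Definition~\ref{lower}; once that is in place, the remainder is a formal assembly using Sylow theory and index arithmetic.
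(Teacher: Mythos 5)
Your proof is correct and is the standard textbook argument (normalizer condition via the lower central series, self-normalizing Sylow normalizers, then the internal direct product criterion); the paper itself states this proposition without proof, treating it as background from a standard reference such as \cite{Alperin-Bell}. One small wording slip: in the normalizer-condition step you should take $g\in\gamma_{k-1}(G)\setminus H$ (which exists since $\gamma_{k-1}(G)\not\leq H$) rather than ``any $g\in\gamma_{k-1}(G)$'', since only then does the conclusion $g\in N_G(H)\setminus H$ follow.
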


\begin{prop}\label{direct product}
A nilpotent group satisfies the Ingleton inequality if all of its Sylow-$p$-subgroups satisfy the Ingleton inequality.
\end{prop}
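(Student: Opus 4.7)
The plan is to use the decomposition $G\cong P_1\times P_2\times\cdots\times P_k$ from the preceding proposition and reduce the Ingleton inequality for $G$ to a product of the Ingleton inequalities for each Sylow factor. The crucial algebraic input is that when the direct factors have pairwise coprime orders, every subgroup of the product splits as a product of its intersections with the factors.

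First I would establish (or cite) the following splitting lemma: if $A$ and $B$ are finite groups with $\gcd(|A|,|B|)=1$, then every subgroup $H\leq A\times B$ satisfies $H=(H\cap A)\times(H\cap B)$. The standard argument is that for $(a,b)\in H$, the orders of $a$ and $b$ are coprime, so a suitable power of $(a,b)$ lies in $A\times\{1\}$ and another in $\{1\}\times B$, forcing both $(a,1)$ and $(1,b)$ into $H$. Iterating gives, for $G=P_1\times\cdots\times P_k$ with the $|P_i|$ pairwise coprime, that every subgroup $H\leq G$ decomposes as $H=\prod_{i=1}^k H^{(i)}$, where $H^{(i)}:=H\cap P_i\leq P_i$.

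Next, given four subgroups $G_1,G_2,G_3,G_4\leq G$, set $G_j^{(i)}:=G_j\cap P_i$. For any nonempty $\mathcal{S}\subseteq\{1,2,3,4\}$, intersecting the splittings of the $G_j$ gives
\[
G_{\mathcal{S}}=\bigcap_{j\in\mathcal{S}}\prod_{i=1}^k G_j^{(i)}=\prod_{i=1}^k \bigcap_{j\in\mathcal{S}} G_j^{(i)}=\prod_{i=1}^k G_{\mathcal{S}}^{(i)},
\]
where $G_{\mathcal{S}}^{(i)}=\bigcap_{j\in\mathcal{S}}G_j^{(i)}\leq P_i$. Therefore $|G_{\mathcal{S}}|=\prod_i|G_{\mathcal{S}}^{(i)}|$ for every $\mathcal{S}$.

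Finally, substitute these factorizations into the Ingleton inequality $\Ing$. Each side becomes a product over $i=1,\ldots,k$ of the corresponding expression formed from the subgroups $G_1^{(i)},\ldots,G_4^{(i)}$ of the Sylow-$p_i$-subgroup $P_i$. By hypothesis, each $P_i$ satisfies the Ingleton inequality for its own subgroups, so each $i$-th factor of the left-hand side dominates the $i$-th factor of the right-hand side; taking the product over $i$ yields the Ingleton inequality for $G$. The only non-routine step is the subgroup-splitting lemma, but it is a completely standard consequence of coprime orders, so no serious obstacle is expected.
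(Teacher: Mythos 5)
Your proposal is correct and follows essentially the same route as the paper: decompose each subgroup of the nilpotent group into its Sylow components, observe that intersections decompose componentwise, and multiply the per-prime Ingleton inequalities together. The only difference is that you supply the justification for the subgroup-splitting step (via coprime orders), which the paper simply asserts.
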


\begin{proof}
Suppose all the Sylow-$p$-subgroups of $G$ satisfy the Ingleton inequality. 

Any subgroup  $H\leq G$ can be written as  $H\isom (H_{p_1},\dots, H_{p_k})$. This implies that $|H|=|H_{p_1}||H_{p_2}|\dots |H_{p_k}|$.
We note that if $H,K\leq G$ then $H\cap K=(H_{p_1}\cap K_{p_1},\dots, H_{p_k}\cap K_{p_k})$.
So let $H_1,H_2,H_3,H_4\leq G$. Since the Sylow-$p_i$-subgroups satisfy the Ingleton inequality we have 
\begin{align*}&|H_{1,p_i}| |H_{2,p_i}| |H_{34,p_i}| |H_{123,p_i}| |H_{124,p_i}|\geq \\ &|H_{12,p_i}| |H_{13,p_i}| |H_{14,p_i}| |H_{23,p_i}| |H_{24,p_i}|\end{align*} for all divisors $p_i\mid |G|$.
Now using the fact that $|H_{j}|=|H_{j,p_1}||H_{j,p_2}|\dots |H_{j,p_k}|$ we get
\begin{align*}&|H_{1}| |H_{2}| |H_{34}| |H_{123}| |H_{124}|\geq \\ & |H_{12}| |H_{13}| |H_{14}| |H_{23}| |H_{24}|,\end{align*} as wanted.
%Sizes and indeces of subgroups is a multiplicative function.
\end{proof}

\begin{cor} Let $G$ be a nilpotent group of class $c$.
If $c$ is strictly smaller than any prime factor of the order of $G$, then $G$ satisfies the Ingleton inequality.
%Given a nilpotent group $G$ of order $p_1^{n_1}\dots p_k^{n_k}$, where $p_1<p_2<\dots < p_n$ and class of $G$ is $c< p_1$, then $G$ satisfies the Ingleton inequality.
\end{cor}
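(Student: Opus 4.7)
The plan is to combine the two tools already assembled in this section: the direct product decomposition for finite nilpotent groups, and the earlier corollary that a $p$-group of class $c<p$ satisfies the Ingleton inequality, via Proposition \ref{direct product}.

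First I would invoke the structure theorem to write $G \isom P_1 \times P_2 \times \dots \times P_k$ where $P_i$ is the Sylow $p_i$-subgroup. Each $P_i$ is (isomorphic to) a subgroup of $G$, and since nilpotency class is inherited by subgroups (and the nilpotency class of a direct product is the maximum of the classes of the factors), each $P_i$ has nilpotency class at most $c$. By hypothesis, $c$ is strictly smaller than every prime factor of $|G|$, and in particular $c < p_i$ for each $i$. This is precisely the setup of the earlier corollary asserting that a $p$-group of class $c<p$ satisfies the Ingleton inequality, so each Sylow $p_i$-subgroup $P_i$ satisfies it.

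Finally, I would apply Proposition \ref{direct product}: since every Sylow-$p$-subgroup of $G$ satisfies the Ingleton inequality, so does $G$ itself. There is no real obstacle here; the statement is essentially an assembly of the pieces already proved. The only small thing to note is the observation that the class of each Sylow subgroup is bounded by the class of $G$, but this is immediate because each $P_i$ embeds as a (normal) subgroup of $G$ and its lower central series is the restriction of the corresponding series of $G$.
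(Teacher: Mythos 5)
Your proposal is correct and follows exactly the paper's own argument: decompose $G$ into its Sylow subgroups, observe each has class at most $c<p_i$ so the $p$-group corollary applies, and conclude via Proposition \ref{direct product}. The extra remark that the class of each Sylow factor is bounded by the class of $G$ is a harmless (and slightly more careful) addition to what the paper states.
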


\begin{proof}
Since the class of $G$ is $c<p_i$ for all Sylow-$p_i$-subgroups of $G$, each of these Sylow-$p_i$-subgroups of $G$ satisfy the Ingleton inequality and hence so does $G$ by Proposition \ref{direct product}.
\end{proof}

As special case we state the following theorem.

\begin{cor}
All class 2 nilpotent groups with odd order are Abelian representable and satisfy the Ingleton inequality.
\end{cor}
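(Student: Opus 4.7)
The plan is to observe that both claims reduce immediately to results already established in the paper, with the only genuine work being to check that Abelian representability behaves well under the Sylow decomposition used for the Ingleton part.

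First I would dispose of the Ingleton statement. Since $|G|$ is odd, every prime $p$ dividing $|G|$ satisfies $p \geq 3$, so the nilpotency class $c=2$ is strictly less than every prime factor of $|G|$. The previous corollary (the nilpotent analogue of the $p$-group result, obtained via Proposition \ref{direct product}) then gives the Ingleton inequality for $G$ for free.

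Next I would prove Abelian representability by transporting the Main Theorem across the Sylow decomposition $G \cong P_1 \times \cdots \times P_k$. For each $i$, the Sylow-$p_i$-subgroup $P_i$ is a $p_i$-group of class at most $2 < p_i$, so the Main Theorem supplies an Abelian group $A^{(i)}$ together with subgroups $A^{(i)}_1,\dots,A^{(i)}_n$ matching the chosen subgroups $P_{i,1},\dots,P_{i,n}$ of $P_i$ in the sense that $\log|P_i : P_{i,\mathcal{S}}| = \log|A^{(i)} : A^{(i)}_\mathcal{S}|$ for every nonempty $\mathcal{S}\subseteq\mathcal{N}$. I would then set $A = A^{(1)} \times \cdots \times A^{(k)}$, which is still Abelian, and take $A_j = A^{(1)}_j \times \cdots \times A^{(k)}_j$. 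Using that subgroups of a direct product of coprime-order factors are themselves direct products of their projections, and that intersections of direct products are computed componentwise, we get $A_\mathcal{S} = \prod_i A^{(i)}_\mathcal{S}$ and likewise $G_\mathcal{S} = \prod_i P_{i,\mathcal{S}}$. Summing the log-index identity over $i$ yields $\log|G:G_\mathcal{S}| = \log|A:A_\mathcal{S}|$, as required.

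There is no real obstacle here: the Lazard-correspondence work is hidden inside the Main Theorem, and the Sylow-product bookkeeping is exactly the same componentwise argument already used in the proof of Proposition \ref{direct product}. The one point I would be careful about is writing an arbitrary subgroup $H \leq G$ as a product of its intersections with the $P_i$, which is legitimate because the $P_i$ have pairwise coprime orders and are characteristic in $G$; this is also implicitly used in the earlier proposition, so no new argument is needed.
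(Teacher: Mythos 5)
Your proof is correct. For the Ingleton half you argue exactly as the paper does: odd order forces every prime divisor to be at least $3>2=c$, so the preceding corollary (via Proposition \ref{direct product}) applies. For the Abelian-representability half you go a slightly different route. The paper's own proof of this corollary is a one-liner that only invokes the Ingleton corollary; the representability claim is left to rest on a proposition quoted from \cite{Oggier2} (that a finite nilpotent group is Abelian representable if all its Sylow subgroups are), which in fact only appears \emph{after} this corollary in the text. You instead make the reduction explicit and self-contained: decompose $G$ as the direct product of its Sylow subgroups $P_i$, apply the Main Theorem to each $P_i$ (legitimate since each has class at most $2<p_i$), and glue the resulting Abelian groups $A^{(i)}$ into $A=\prod_i A^{(i)}$, using that subgroups and intersections in a direct product of coprime-order factors decompose componentwise so that the log-indices add up. This is the same componentwise bookkeeping already used in Proposition \ref{direct product}, and your version has the advantage of not outsourcing the representability step to an external reference; the paper's version is shorter but, as written, does not actually justify the word ``Abelian representable'' at the point where this corollary is stated. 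One small remark: you do not need the $P_i$ to be characteristic, only that $G$ is their internal direct product with pairwise coprime orders, which already forces $H=\prod_i(H\cap P_i)$ for every $H\leq G$.
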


\begin{proof}
If $G$ has odd order, then $p_1>2$ and the above corollary holds.
\end{proof}

The last application is to Abelian representability of finite nilpotent groups, and we recall Proposition 20 in \cite{Oggier2}.

\begin{prop}
A finite nilpotent group is Abelian representable if all of its Sylow-p-subgroups are Abelian representable.
\end{prop}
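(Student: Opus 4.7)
The plan is to mirror the proof of Proposition \ref{direct product}, but tracking the precise index equalities required for Abelian representability rather than merely the Ingleton inequality. First I would write $G = P_1 \times \cdots \times P_k$ as the direct product of its Sylow-$p_i$-subgroups and recall the standard fact that, because the primes $p_i$ are pairwise distinct, any subgroup $H \leq G$ splits as $H = H_{p_1} \times \cdots \times H_{p_k}$ with $H_{p_i} = H \cap P_i$. Consequently, for any collection $G_1,\dots,G_n \leq G$ and any nonempty $\mathcal{S}\subseteq\mathcal{N}$, the intersection $G_{\mathcal{S}}$ decomposes componentwise as $G_{\mathcal{S}} = (G_{\mathcal{S}})_{p_1}\times\cdots\times (G_{\mathcal{S}})_{p_k}$ with $(G_{\mathcal{S}})_{p_i} = \bigcap_{j\in\mathcal{S}} G_{j,p_i}$, so that indices multiply across the primes: $|G:G_{\mathcal{S}}| = \prod_{i=1}^{k}|P_i:(G_{\mathcal{S}})_{p_i}|$.

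Next I would invoke the hypothesis: for each $i$, because $P_i$ is Abelian representable, applied to the specific subgroups $G_{1,p_i},\dots,G_{n,p_i}$, there exist an Abelian group $A^{(i)}$ and subgroups $A^{(i)}_1,\dots,A^{(i)}_n\leq A^{(i)}$ such that
$$\log\Big| P_i : \bigcap_{j\in\mathcal{S}} G_{j,p_i}\Big| \;=\; \log\Big| A^{(i)} : \bigcap_{j\in\mathcal{S}} A^{(i)}_j\Big|$$
for every nonempty $\mathcal{S}\subseteq\mathcal{N}$. Setting $A := A^{(1)}\times\cdots\times A^{(k)}$ and $A_j := A^{(1)}_j\times\cdots\times A^{(k)}_j$ produces an Abelian group equipped with $n$ subgroups whose intersections split componentwise in exactly the same way as on the $G$-side. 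Multiplicativity of the index over direct products then gives $|A:A_{\mathcal{S}}| = \prod_{i=1}^{k}|A^{(i)}:A^{(i)}_{\mathcal{S}}|$, and combining this with the displayed identities yields $|G:G_{\mathcal{S}}| = |A:A_{\mathcal{S}}|$ for every $\mathcal{S}$, which is precisely the Abelian representability of $G$.

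The only substantive point to verify is the opening bookkeeping claim, namely that a subgroup of a direct product of coprime-order groups itself factors over the components and that this factorisation commutes with intersections. Both are consequences of the decomposition of a finite nilpotent group into its Sylow subgroups recalled just before the statement, applied to each subgroup in place of $G$, together with the observation that a Sylow-$p_i$-subgroup of $H\leq G$ is forced to sit inside $P_i$. Once this is in place the rest is pure index arithmetic, parallel to Proposition \ref{direct product} with equalities replacing the inequality.
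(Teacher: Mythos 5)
Your argument is correct. Note, however, that the paper does not actually prove this proposition: it is quoted from Proposition 20 of \cite{Oggier2}, so there is no in-paper proof to compare against. What you have written is the natural self-contained argument, and it is the exact analogue, with equalities in place of the inequality, of the paper's proof of Proposition \ref{direct product}: subgroups of a finite nilpotent group split over the Sylow components, the splitting commutes with intersections, indices multiply across components, and an Abelian representing group for $G$ is assembled as the direct product of Abelian representing groups chosen for the subgroup tuples $G_{1,p_i},\dots,G_{n,p_i}$ in each Sylow factor. The supporting facts are all justified correctly; in particular $H\cap P_i$ really is the Sylow $p_i$-subgroup of $H$ because $P_i$ is the unique (normal) Sylow $p_i$-subgroup of $G$, and intersections trivially commute with intersecting against $P_i$. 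The only caveat is definitional: in \cite{Oggier2} ``Abelian representable'' is phrased in terms of the induced quasi-uniform distributions rather than only the entropy vector; your construction in fact handles that stronger requirement as well, since the distribution induced by a direct product of groups with componentwise subgroups is the product of the componentwise distributions, but under this paper's working definition (matching all indices $|G:G_{\mathcal{S}}|$) your proof is already complete as written.
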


\begin{cor}
A finite nilpotent group is Abelian representable, if the smallest prime dividing the order of the group is bigger than the nilpotency class.
\end{cor}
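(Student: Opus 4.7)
The strategy is to combine the Main Theorem of Section 2 with the Proposition from Oggier that is quoted immediately above this corollary, using the decomposition of a finite nilpotent group as a direct product of its Sylow subgroups.

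First I would write $G \isom P_1 \times P_2 \times \dots \times P_k$, where $P_i$ is the Sylow-$p_i$-subgroup of $G$ and $p_1 < p_2 < \dots < p_k$ are the primes dividing $|G|$. The key observation is that each factor $P_i$, being a subgroup (indeed a direct factor) of $G$, has nilpotency class at most $c$; this is immediate from Definition \ref{lower}, since the lower central series of $P_i$ embeds into the restriction of the lower central series of $G$ to $P_i$. By hypothesis, $c < p_1 \leq p_i$ for every $i$, so each Sylow subgroup $P_i$ is a finite $p_i$-group of class strictly less than $p_i$.

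Next I would invoke the Main Theorem on each $P_i$ separately to conclude that every Sylow subgroup of $G$ is Abelian representable. Finally I would apply the Proposition from \cite{Oggier2} quoted above: Abelian representability of all Sylow subgroups implies Abelian representability of the whole nilpotent group $G$. This assembles the required Abelian representation of $G$ from the Abelian representations of its Sylow factors.

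There is essentially no obstacle here, since the two heavy pieces (the Main Theorem and the Sylow-factor criterion) have already been established. The only point that deserves being made explicit is that taking Sylow subgroups does not increase nilpotency class, which follows from the fact that $\gamma_{i+1}(P_j)\leq \gamma_{i+1}(G)\cap P_j$ for each $i$, so $\gamma_{c+1}(G) = 1$ forces $\gamma_{c+1}(P_j) = 1$.
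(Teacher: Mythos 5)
Your proof is correct and follows exactly the route the paper intends: the paper leaves this corollary without an explicit proof, but its placement immediately after the quoted Proposition from \cite{Oggier2} makes clear that the intended argument is precisely your combination of the Sylow decomposition, the Main Theorem applied to each $P_i$ (using that the class of a Sylow factor is at most the class of $G$), and the Sylow-factor criterion for Abelian representability. Your extra remark that $\gamma_{c+1}(P_j)\leq \gamma_{c+1}(G)\cap P_j$ is a welcome explicit justification of a step the paper takes for granted.
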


%Soluble groups produce nice codes, as shown by Kaufman and Widgerson \cite{Kaufman-Widgerson}, and moreover Kaufman and Lubotzky \cite{Kaufman-Lubotzky} have given good codes based on $\mathrm{PGL}_2(p)$ construction.

%We can also consider this question as $p$ varies. How does the lattice of subgroups vary? Elliptic curve example shows that the variation can be quite wild. 

\section{Conclusion}

We have shown that nilpotent groups, with a condition on their nilpotency class are Abelian representable. Soluble groups are not  Abelian group representable. Already the smallest non-Abelian group $S_3$ is not Abelian representable as noted in \cite{Oggier1}. 
Hence soluble groups might be the first class of groups to bring new entropy vectors. Boston and Nan have found examples of soluble groups that violate the Ingleton inequality \cite{Boston}, which proves that they might be the first groups to provide interesting network codes. The other classes of groups that have been identified to violate the Ingleton inequality by Mao and Hassibi include $\mathrm{PGL}_2(p)$.  These groups are not soluble since they contain $\mathrm{PSL}_2(p)$ as normal subgroup, and the $\mathrm{PSL}_2(p)$  are simple. Hence, we would speculate that simple groups would  be a better test ground for constructing interesting network codes and entropy vectors.

%We can also consider this question as $p$ varies. How does the lattice of subgroups vary? Elliptic curve example shows that the variation can be quite wild. 

\end{document}